%% file: main.tex
\documentclass[11pt,letterpaper,final]{article}
\usepackage[papersize={8.5in,11in},margin=1in]{geometry}
\usepackage{microtype}
\usepackage{comment}
\usepackage{amsmath,amssymb,amsthm}
\usepackage{xcolor}
\usepackage{booktabs,tabularx}
\usepackage{caption}
\usepackage{algorithm}
\usepackage{placeins}
\usepackage[hidelinks]{hyperref}
\usepackage[indLines=true]{algpseudocodex}
\usepackage{tikz}
\usepackage[disable]{todonotes}

\usetikzlibrary{
  arrows.meta,
  decorations.markings,
  decorations.pathreplacing,
  positioning
}
\definecolor{classicblue}{RGB}{0,114,178}
\definecolor{classicgreen}{RGB}{0,158,115}
\definecolor{neutralgray}{RGB}{127,127,127}

\newtheorem*{theorem*}{Theorem}

\newtheorem*{lemma*}{Lemma}

\theoremstyle{definition}

\theoremstyle{remark}

\title{\textbf{Space-Efficient Hierholzer for Undirected Graphs}}

\author{
    Elena Grigorescu\thanks{University of Waterloo, Canada.
    \texttt{\{elena.grigorescu,\,s3shirazimofrad\}\,@\,uwaterloo.ca}}
    \qquad
    \stepcounter{footnote}
    Ziad Ismaili Alaoui\thanks{University of Liverpool, United Kingdom.
    \texttt{\{ziad.ismaili-alaoui,\,sebastian.wild\}\,@\,liverpool.ac.uk}}
    \qquad
    Tamio-Vesa Nakajima\thanks{Philipps-Universit\"at Marburg, Germany.
    \texttt{\{nakajima,\,wild\}\,@\,informatik.uni-marburg.de}}
    \\[0.5em]
    Shayan Shirazi Mofrad\footnotemark[1]
    \qquad
    Sebastian Wild\footnotemark[3]\,\,\footnotemark[4]
}

\date{}

\newif\ifshowcomments
\showcommentstrue
\ifshowcomments
\newcommand{\enote}[1]{{\color{green}[{\small Elena: \bf #1}]\marginpar{\color{red}*}}}
\newcommand{\snote}[1]{{\color{purple}[{\small Shayan: \bf #1}]\marginpar{\color{red}*}}}
\newcommand{\tnote}[1]{{\color{orange}[{\small Tamio: \bf #1}]\marginpar{\color{purple}*}}}
\newcommand{\znote}[1]{{\color{teal}[{\small Ziad: \bf #1}]\marginpar{\color{teal}*}}}

\else
\newcommand{\enote}[1]{}
\newcommand{\snote}[1]{}
\newcommand{\tnote}[1]{}
\newcommand{\znote}[1]{}
\newcommand{\todo}[1]{}
\newcommand\SW{\color{blue!80!black}SW: #1}
\newcommand\TVN{\color{purple}TVN: #1}
\fi

\newcommand{\bigO}{\ensuremath{\mathrm{O}}}

\begin{document}
\maketitle

\todo[inline,caption=global]{\begin{minipage}{.9\linewidth}
    GLOBAL TODO (by priority)
    \begin{itemize}
        \item end-to-end proofread ;)
              Minor grammar improvements
        \item harmonize notation
        \begin{itemize}
            \item Eulerian tour (not cycle/circuit) for global problem
        \end{itemize}
        
    \end{itemize}
\end{minipage}}

\begin{abstract}
We present a simple linear-time algorithm that outputs an Eulerian tour of an
undirected multigraph with $n$ vertices and $m$ edges, if one exists, in $\bigO(m)$ time and using
$\bigO(n)$ words of working memory.
The input is given as read-only adjacency lists, and the output is written to an append-only stream in traversal order. Our algorithm first finds a sparse spanning circuit (a \emph{skeleton}),
then traverses the circuit step-by-step, repeatedly outputting further circuits rooted at the current vertex. This solves a problem left open by Ismaili Alaoui, Plump,
and Wild (SOSA~2026): their space-efficient variant of Hierholzer's algorithm
handles general \emph{directed} multigraphs, but it is unclear how to generalize it to general \emph{undirected} multigraphs. Our result completes the picture in the read-only model for space-efficient output of Eulerian tours.
\end{abstract}

\section{Introduction}
\label{s:introduction}

We study the well-known problem of computing an \emph{Eulerian tour} of an undirected (multi)graph, i.e.~a tour containing all the edges of a given multigraph. Our work is motivated by the recent results of Ismaili Alaoui, Plump, and Wild~\cite{IsmailiPW26}, who obtain an $\bigO(m)$-time algorithm using $\bigO(n)$ words of working memory for \emph{directed} multigraphs; however, it is unclear how to generalize this algorithm to \emph{undirected} multigraphs.
Our main result is the following (illustrated in Figure~\ref{fig:skeleton-overview}).

\begin{theorem*}[Main]
Let $G$ be an undirected Eulerian multigraph with $n$ vertices and $m$ edges, represented by read-only, unsorted adjacency arrays without edge identifiers. There exists an algorithm that outputs an Eulerian tour of $G$ sequentially on an append-only output stream in $\bigO(m)$ time using $\bigO(n)$ words of working memory.
\end{theorem*}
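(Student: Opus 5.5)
The plan reduces to a directed-style ``one pointer per vertex'' scan. The only new difficulty in the undirected setting is that every edge $\{u,v\}$ appears twice, once in the array of $u$ and once in that of $v$, and we cannot afford $\Theta(m)$ bits to mark the second copy as used. I would therefore design the traversal so that, at every moment, whether a copy in $v$'s array is already used is \emph{determined} by $O(1)$ words of per-vertex state together with the global progress of the algorithm. Concretely, I keep for every vertex $v$ a scan pointer $p_v$ into its adjacency array and a rank $r(v)\in\{1,\dots,n\}$. I also keep a constant number of counters per vertex, which gives $O(n)$ words in total.

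\textbf{Step 1: skeleton and ranks.} First I compute a \emph{skeleton}: a closed trail $S$ in $G$ with $O(n)$ edges that visits every vertex. I would build it from a BFS/DFS spanning tree, computed with $O(n)$ words by storing parent pointers and array positions. For every tree edge I close a short cycle using the evenness of degrees. Where that fails, I fall back to taking two parallel copies when the multiplicity allows it, or to splicing in a non-tree edge found during the same scan. The skeleton, stored as an explicit list of $O(n)$ array positions, orders the vertices by first appearance; let $r(v)$ be this rank. Since $S$ is stored explicitly, its edges can be recognised and skipped in $O(1)$ time per encounter: I store, for each vertex, the sorted list of skeleton positions in its array.

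\textbf{Step 2: traverse the skeleton and emit rooted circuits.} Walk along $S$. At the first visit of a vertex $v$, before taking the next skeleton edge, output a closed walk $C_v$ rooted at $v$ that uses exactly the non-skeleton edges $\{v,w\}$ with $r(w)>r(v)$ not yet output, plus possibly edges among vertices of rank $>r(v)$. Splicing these $C_v$ into $S$ yields the final tour, and the output is produced strictly in traversal order. The intended invariant is the following: when $v$ is reached, every edge with an endpoint of rank $<r(v)$ has already been output. Hence, in any array, a copy pointing to a vertex of smaller rank than the current root is known to be used and is skipped by rank comparison alone. Each array entry is passed over $O(1)$ times by the monotone pointers $p_v$, which gives $O(m)$ time.

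\textbf{Main obstacle.} The hard part is the inner construction of $C_v$. If $C_v$ is found by plain Hierholzer-style greedy walking, it consumes edges $\{a,b\}$ between two later vertices. The copy at $b$ is then unmarked, and rank comparison does not reveal that it is used. My first attempt would restrict $C_v$ to be a \emph{star-like} circuit: pair the remaining edges at $v$ as $(v,a),(v,b)$, and close each pair either directly (when $a=b$, via multiplicities) or by routing through edges whose usage can be counted. For routing, each vertex $w$ stores counters $c_w(u)$ only for the $O(1)$ current ``pending'' partners, so that the copies at $b$ of edges used from $a$ are consumed first when $b$'s pointer later reaches them. Any parity surplus that cannot be closed locally is pushed forward along the next skeleton edge to the following vertex, which is exactly where the skeleton is needed. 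Proving that the bookkeeping per vertex stays $O(1)$ words amortised, and that pushing surplus forward never breaks the rank invariant, is the step I expect to take the most work. If it fails, I would instead derive an implicit Eulerian orientation from the skeleton order and reuse the directed algorithm of~\cite{IsmailiPW26}.
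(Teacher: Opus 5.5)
Your outer structure matches the paper's: a skeleton, vertices ranked by first appearance, and a circuit spliced in at each first visit. However, both load-bearing steps are missing.

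\textbf{Step~1.} ``Closing a short cycle per tree edge'' gives neither edge-disjointness nor an $O(n)$ bound. What works is a parity fix. Apply the Forest Lemma to $G-T$ to get a forest $F\subseteq G-T$ with $G-(T+F)$ even. Then $T+F$ is even, connected, has at most $2n-2$ edges, and its Euler tour is the skeleton.

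\textbf{Step~2.} This is the real gap, and you identify it yourself. Your invariant (every edge with an endpoint of rank $<r(v)$ is output by the time $v$ is reached) forces $C_v$ to use \emph{all} remaining edges at $v$. Closing these up generally consumes edges between later vertices, whose use you cannot record. The $O(1)$ ``pending partner'' counters are unjustified: a later vertex can be routed through by many roots towards many different partners. ``Pushing parity surplus forward'' is precisely the part that needs a concrete structure. The fallback does not help either. An implicitly computable Eulerian orientation is exactly what reusing the directed algorithm would require, and the natural rank-based orientation (lower to higher rank) is not balanced.

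\textbf{The missing idea} is to invert your invariant. Do not let already-output edges leak into the unprocessed part; instead, let not-yet-output edges remain in the processed part, and keep that remainder a forest. Let $G_i$ be the edges of $G-S$ whose smaller endpoint is at most $i$, and $N_i=G_i-G_{i-1}$. The paper maintains $C_1+\dots+C_i+F_i=G_i$ with $F_i$ a forest stored explicitly:
\begin{itemize}
\item The Forest Lemma applied to $F_{i-1}+N_i$ yields $F_i$, and $(F_{i-1}+N_i)-F_i$ is even.
\item Since $F_{i-1}$ is acyclic and every edge of $N_i$ is incident to $i$, every circuit in that even graph passes through $i$. So it is a single circuit $C_i$ rooted at $i$ (or empty).
\item Edges whose smaller endpoint exceeds $i$ are never touched, so the unmarked-second-copy problem disappears. ``Already output'' is decided by the index plus the stored forest.
\item $F_n$ is an even forest, hence empty.
\end{itemize}

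\textbf{Running time.} Your $O(m)$ claim ignores the cost of building the circuits. One Forest-Lemma call per vertex can cost $\Theta(n)$, since the stored forest may have $n-1$ edges. That is $\Theta(n^2)$ overall, which is not $O(m)$ for sparse graphs. Linear time requires batching $N_i,\dots,N_j$ into groups of $\Theta(n)$ distinct edges, with parallel copies compressed into at most two parity-preserving virtual edges per distinct edge. This gives $O(m/n)$ batches of $O(n)$ work each.
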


Our algorithm can also easily be extended to Eulerian walks, as described in  Section~\ref{s:conclusion}. Finding an Eulerian tour is one of the foundational problems of graph theory with applications in genome assembly~\cite{PevznerTW01}, route inspection~\cite{Orloff74}, and continuous tool-path planning~\cite{Yamamoto22}. Hierholzer is often credited with first proposing a set of instructions to systematically construct Eulerian tours in undirected multigraphs, and his results were posthumously communicated by Wiener \cite{Hierholzer1873}. Needless to say, his work did not specify how to implement this algorithm in a time- and space-efficient way for modern computers.

\input{ear-decomp-fig}

\paragraph{Previous Work.}
Direct implementations of Hierholzer's algorithm run in linear time, but either maintain the partially constructed tour as a dynamic list or store the stack of an edge-centric DFS, together with information recording which edges have already been used, so their working memory grows with the number of edges. 
Implementations in widely used software frameworks also follow this paradigm (see~\cite{IsmailiPW26} for more discussion).

Several algorithms reduce this space requirement. For example, Hagerup, Kammer, and Laudahn~\cite{HagerupKL19} compute an Euler partition of an undirected graph in $\bigO(m)$ time using $\bigO(m)$
bits of space. Their bound is incomparable to ours:
$\bigO(m)$ bits is smaller for sparse graphs, while
$\bigO(n)$ words (which amounts to $\bigO(n \lg m)$ bits) is smaller whenever $m=\omega(n\lg n)$.
Glazik, Schiemann, and Srivastav~\cite{GlazikSS23} give a one-pass
streaming algorithm that uses $\bigO(n\lg n)$ bits; however, it outputs
an implicit successor representation of the tour rather than the tour
itself written in order, and no linear runtime bound is established.

For \emph{directed} (multi)graphs, the algorithm of Ismaili Alaoui, Plump, and
Wild~\cite{IsmailiPW26} uses a reverse traversal, reserving one distinguished
incoming edge per vertex to be the last edge on which the traversal
backtracks through that vertex. This requires only a constant amount of
information per vertex. 
It is unclear how to generalize their algorithm to undirected graphs.

\paragraph{Our Results.}
We here give a rather different approach to implementing Hierholzer's algorithm, 
with the same space and time complexity, and in the same model as~\cite{IsmailiPW26}.
The results are orthogonal;
our new method cannot work with directed graphs,%
\footnote{%
     At least as it is, our algorithm has no hope to be adapted to the directed case due to a result of Colòn and Urschel~\cite{ColonUrschel2024}: there exists an infinite family of directed Eulerian graphs on $n$ vertices and $\Theta(n^{3/2})$ edges that do not contain a proper subgraph that is both Eulerian and a spanning subgraph.
} but solves the open problem for undirected (multi)graphs.
Our algorithm can easily be extended to handle Eulerian paths.
Table~\ref{table-refs} provides a quick comparison of the results discussed
above.

Together with the directed algorithm of Ismaili Alaoui, Plump, and
Wild~\cite{IsmailiPW26}, our result completes the picture in the read-only
adjacency-array model considered here: Eulerian tours of both directed and
undirected multigraphs can be output in order in $\bigO(m)$ time using
$\bigO(n)$ words of working memory.

\begin{table}
\centering
\small
\setlength{\tabcolsep}{4pt}
\renewcommand{\arraystretch}{1.15}
\begin{tabularx}{\textwidth}{
    @{}
    >{\raggedright\arraybackslash}p{0.3\textwidth}
    >{\raggedright\arraybackslash}X
    >{\raggedright\arraybackslash}p{0.15\textwidth}
    >{\raggedright\arraybackslash}p{0.20\textwidth}
    @{}
}
\toprule
\textbf{Algorithm} &
\textbf{Setting} &
\textbf{Time} &
\textbf{Space} \\
\midrule

Standard Hierholzer &
Directed or undirected multigraphs &
$\bigO(m)$ &
$\bigO(m\lg n)$ bits \\
\midrule

Hagerup et al.~\cite{HagerupKL19} &
Undirected graphs; general representation &
$\bigO(m)$ &
$\bigO(m)$ bits \\
\midrule

Glazik et al.~\cite{GlazikSS23} &
One-pass streaming model &
not established &
$\bigO(n\lg n)$ bits \\
\midrule

Ismaili Alaoui et al.~\cite{IsmailiPW26} &
Directed multigraphs &
$\bigO(m)$ &
$\bigO(n\lg m)$ bits \\
\midrule

\textbf{This paper} &
Undirected multigraphs&
$\bigO(m)$ &
$\bigO(n\lg m)$ bits \\
\bottomrule
\end{tabularx}
\caption{Comparison of algorithms for computing Eulerian tours.}
\label{table-refs}
\end{table}

\section{Preliminaries}
\label{s:model}

\paragraph{Definitions.}
We will assume our graphs have at least two vertices. We disallow isolated vertices. We allow both parallel edges and loops in our graphs; a loop counts twice towards the degree of a vertex.
We define $+$ and $-$ on multigraphs by addition and subtraction of edge multiplicities.
(All our graphs will share the same vertex set.)

A \emph{walk} in a (multi)graph $G$ is a sequence of edges $(u_1, v_1), \ldots, (u_k, v_k)$ where $v_i = u_{i + 1}$ for $i = 1, \ldots, k - 1$. Our walks are \emph{edge simple}: they cannot reuse edges. A \emph{circuit} is a walk where $u_1 = v_k$. We say that a circuit is \emph{spanning} if it contains all the vertices of~$G$. An \emph{Eulerian tour} is a circuit that contains every edge of $G$ exactly once. We say that a multigraph is \emph{even} if all its degrees are even, and \emph{Eulerian} if it admits an Eulerian tour.
A \emph{skeleton} of $G$ is a spanning circuit $S$ containing at most $2n$ edges.

Recall Euler's theorem: $G$ is Eulerian if and only if it is even and connected. Furthermore, recall that Hierholzer's algorithm can find an Eulerian tour for any Eulerian graph in $O(n + m)$ time and space. We will use the following simple corollary freely: Any even graph has Eulerian connected components, and we can find a partition of the edges of an even graph into circuits in $O(n + m)$ time and space, by applying Hierholzer's algorithm to each connected component. Recall also the Handshake lemma: in any graph $G$, the sum of the degrees of all vertices is even. Note that Euler's theorem and the Handshake lemma hold for graphs with parallel edges and loops, with our convention for counting degrees with loops.

\paragraph{Computational Model.}
The vertices are labeled $1,\ldots,n$, and for each vertex $v$, the input contains a read-only adjacency array $A(v)$. 
Every non-loop edge $uv$ appears in the input as a copy of $u$ in $A(v)$ and a copy of $v$ in $A(u)$ --- loops appear once. 
Parallel edges and self-loops are allowed, with the different instances of a parallel edge being indistinguishable in the adjacency lists of their endpoints.
We work in the standard word RAM model: working memory is measured in words, where a vertex, multiplicity, or adjacency list position occupies one word, and reading or updating one word takes constant time. 
The output is an Eulerian tour in which each edge is traversed exactly once. 
The read-only input arrays and the append-only output stream are not counted as working memory.

\paragraph{Multigraph Subtraction.} Our algorithm repeatedly manipulates multigraphs of form $G - H$, where $G$ is the input graph (and hence unmodifiable), and $H$ is some \emph{sparse} subgraph of $G$. 
To simulate access to $G-H$ using $O(n)$ working space, we proceed as follows. 
First, explicitly store the adjacency lists of $H$. We next reorder these so that the adjacency list of each vertex is a (not necessarily contiguous) subsequence of the adjacency list in $G$. (Do this adjacency list by adjacency list. For a particular vertex $i$, count the frequency of each edge $ij$ in $H$ in a frequency table. Then iterate through the neighbours of $i$ in $G$, and write down the new adjacency list in $H$ by outputting edges that still have nonzero frequency, decrementing them in the frequency table after outputting them. Note that this decrementing resets the frequency table to contain only zeroes, avoiding a full reset which could take $O(n)$ per vertex.)

With this in mind, we can simulate iterating through the adjacency lists of $G - H$ by iterating in parallel through those of $G$ and $H$, skipping one copy of each edge from $H$ in $G$. Compared to traversing $G$, this incurs an overhead depending on the size of $H$ and the number of times we traverse $G - H$, but $H$ will be small and all of our algorithms will traverse each adjacency list of $G - H$ at most $O(1)$ times, and thus this is negligible.  

\paragraph{Forest Lemma.}
We use a simple folklore lemma, which we prove for completeness; graph theorists will recognize this as computing the unique \emph{$T$-join} within a spanning forest of $G$ for $T$ the set of vertices of $G$ with odd degree.
\begin{lemma*}
    Given a multigraph $G$ with $n$ vertices and $m$ edges, we can find a forest $F \subseteq G$ in $\bigO(m)$ time and $\bigO(n)$ space, such that $G-F$ is even.
\end{lemma*}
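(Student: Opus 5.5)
We first build a spanning forest $F_0$ of $G$ and then choose the sub-forest $F \subseteq F_0$ consisting of exactly those tree edges $(p(v),v)$ for which the subtree rooted at $v$ contains an odd number of odd-degree vertices of $G$. Then $G-F$ is even: every vertex keeps its parity except where it is corrected by $F$.

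\textbf{Building $F_0$.} We run a BFS (or DFS) over $G$ directly on the read-only adjacency arrays. The working state is a visited bit, a parent pointer $p(v)$, and a queue, all of size $\bigO(n)$. We restart from every unvisited vertex so that every component gets a spanning tree. The search scans each adjacency array once, so this takes $\bigO(m)$ time. During the search we also record the BFS order of each tree. Each vertex's degree $\deg(v)=|A(v)|$ plus the number of loops at $v$ (loops appear once but count twice) is computed while scanning $A(v)$, and we store only its parity $d(v)\in\{0,1\}$.

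\textbf{Choosing $F$.} We process vertices in reverse BFS order, maintaining a bit $c(v)$ that is initialised to $d(v)$. When a non-root vertex $v$ is processed, $c(v)$ already equals the parity of the number of odd vertices in the subtree of $v$. If $c(v)=1$, we put the edge $(p(v),v)$ into $F$ and flip $c(p(v))$. This is a single $\bigO(n)$ pass.

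\textbf{Correctness.} We check that $\deg_F(v)\equiv d(v) \pmod 2$ for every $v$. For a non-root $v$, $\deg_F(v)$ is the number of children $u$ with $c(u)=1$, plus $[c(v)=1]$ for the parent edge. Since $c(v)=d(v)+\sum_{u\text{ child}}c(u) \bmod 2$, this gives $\deg_F(v)\equiv d(v)$. For a root $r$, we need $c(r)=0$, i.e.\ $\sum_{u\text{ child}} c(u)\equiv d(r)$. This holds because $c(r)$ is the parity of the number of odd-degree vertices in the whole component, which is even by the Handshake lemma applied to that component. Hence every vertex has even degree in $G-F$. $F$ is a forest because it is a subgraph of $F_0$. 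A parent edge is just one specific copy of a (possibly parallel) edge, which is fine since $F\subseteq G$ as a multigraph.

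\textbf{Main obstacle.} The only delicate points are bookkeeping in the model. We must count loops correctly when computing parities. We must also store $F$ explicitly as a list of at most $n-1$ edges, so that $G-F$ can later be simulated as described under Multigraph Subtraction. Both are handled within $\bigO(n)$ words and $\bigO(m)$ time.
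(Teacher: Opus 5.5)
Your proposal is correct and follows essentially the same route as the paper. You build a spanning forest by graph search, then sweep children before parents (reverse BFS order where the paper uses DFS postorder), adding a vertex's parent edge exactly when its subtree has odd degree parity, with the root consistent by the Handshake lemma; you also spell out the parity check and the loop-counting detail that the paper leaves implicit.
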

\begin{proof} 
    Solve each component separately, so assume $G$ is connected. Run a DFS on $G$, computing a spanning tree $T$. For each vertex in a postorder traversal of $T$, compute the parity of the total $G$-degree of its subtree in $T$; if odd, add its parent edge to $F$. This is well-defined since the parity at the root is even by the Handshake lemma. A short parity check shows $G-F$ is even, $F$ is a subgraph of $T$ and thus a forest, and DFS gives the claimed
    complexity.
\end{proof}

\section{Algorithm}

We now present and prove the main result of the paper. Our algorithm has two phases. The setup phase computes a skeleton $S$. The main phase uses $S$ as a \emph{roadmap}
to compute an Eulerian tour of $G$: it traverses $S$ step by step, outputting circuits rooted at the current vertex, and using $S$ to ensure that we can eventually get back to the start vertex (cf.\ Figure~\ref{fig:skeleton-overview}).

\subsection{Overview}

\paragraph{Setup phase.}
We first construct the skeleton $S$. Run a DFS on $G$ to obtain a spanning tree~$T$, and apply the Forest Lemma to
$G-T$.
This produces a forest $F\subseteq G-T$ such that $G-(T+F)$ is even. Since $G$ is even, $T+F$ is also even; moreover, it is connected
because it contains the spanning tree $T$. Therefore, $T+F$ is spanning and Eulerian. Since both $T$ and $F$ are forests, $T + F$ has at most $2n - 2$ edges,
so we can find a skeleton of $G$ by computing an Eulerian tour of $F + T$ in $O(n)$ time and space, using standard Hierholzer's algorithm. The entire setup takes $\bigO(m)$ time and uses $\bigO(n)$ words of working memory.

Let $v_1,\ldots,v_n$ be the vertices in the order
they first appear in $S$
starting from an arbitrary start vertex $v_1$.
By explicitly storing both the sequence $v_1, \ldots, v_n$ and the inverse map $v_i \mapsto i$ (which takes $\bigO(n)$ words), we may assume without loss of generality that $v_1 = 1, \ldots, v_n = n$.
Split $S$ into walks $W_1,\ldots, W_n$ where $W_i$ starts at $i$ and ends at $i+1$; $W_n$ starts at $n$ and ends at 1.

\paragraph{Main Phase.} We will sequentially construct a sequence of circuits $C_1, \ldots, C_n$. Circuit
$C_i$ will start and end at vertex $i$.
Our Eulerian tour will then be the
interleaving 
\[
	C_1, W_1,\, C_2, W_2,\,\ldots,\,C_{n-1},W_{n-1},\,C_n,W_n\,.
\] 
The circuits will have the following property. Define $G_0, G_1, \ldots, G_n$ as
subgraphs of $G - S$ with the same vertex set,%
\footnote{%
    Formally speaking, we should write $G - H$ where $H$ is the sub(multi)graph of $G$ formed from the edges occurring in $S$. For brevity, we identify subgraphs and edge sets, and use tours as edge sets.
} 
and where $G_i$ contains all the
edges $uv$ where $\min(u, v) \leq i$. 
(Of course, $G_0$ has no edges.)
The key property (which follows from a simple induction, see below) will be the
following: there exist forests $F_0 \subseteq G_0, F_1 \subseteq G_1, \ldots, F_n \subseteq G_n$
such that for every $i = 0, \ldots, n$,
\begin{equation}\label{eq:invariant}
    C_1 + \cdots + C_i + F_i \;=\; G_i.
\end{equation}
In other words, $G_i$ is partitioned into $C_1, \ldots, C_i$ and $F_i$. Note that this partition is always possible since, by the Forest Lemma, there always exists an $F_i$ such that $G_i-F_i$ is even, and any such graph can be decomposed into circuits. At every step $i$, we will not explicitly remember $C_1, \ldots, C_i$; rather,
only $F_i$, which is a forest, i.e.~sparse, and thus fits within our memory
constraints. Note furthermore that $F_n$ \emph{must} be the empty graph: since $C_1 + \cdots + C_n$ and $G_n$ have even degrees, $F_n$ does too, and $F_n$ is a forest; the only such graph
is the empty graph. Hence~\eqref{eq:invariant} implies we output a full Eulerian tour. Note also that~\eqref{eq:invariant} obviously holds for $i = 0$.

\subsection{Naive Implementation}

We first describe a naive way to implement this, which degrades to quadratic time. Suppose we have just
finished step $i - 1$, having computed and stored $F_{i - 1}$, and output $W_{i - 1}$ if it exists. We now want to output
$C_{i}$. To do this, let $N_i$ be the edge neighbourhood of $i$ in $G_i$, or equivalently $N_i = G_i - G_{i - 1}$. We
apply the Forest lemma to $F_{i - 1} + N_i$, to create the forest $F_i$.
Note
that by construction $F_i$ is a forest, and $(F_{i - 1} + N_i) - F_i$ 
has even degrees; thus we can partition the edges of $(F_{i - 1} + N_i) - F_i$
into one circuit per connected component. But note that the edges therein are taken from the forest
$F_{i - 1}$ together with some arbitrary edges $N_i$, all incident at $i$. Thus
any circuit must contain $i$, and in fact our partition consists of a single circuit containing $i$, which we take to be $C_i$ (unless $(F_{i - 1} + N_i) - F_i$ has no edges, in which case $C_i$ is also empty). To see why~\eqref{eq:invariant} is maintained, note that
\[
C_i + F_i - F_{i - 1}
=
(F_{i - 1} + N_i - F_i)+ F_i - F_{i - 1} = N_i = G_i - G_{i-1}.
\]
Adding this equation to~\eqref{eq:invariant} for $i - 1$ yields it for $i$. 

\subsection{Linear Time Implementation}

We first describe a method to implement the algorithm in linear time for simple graphs, and then discuss multigraphs.

\paragraph{Simple Graphs.}The previous algorithm has quadratic runtime since the
computation of $C_i$ takes (up to) linear time, and we do this $n$ times. We first give a simplified linear-time algorithm, albeit one that only works for graphs \emph{without parallel edges}. The trick to reduce the runtime is to produce several of these circuits at once. 
Suppose we have just finished step $i - 1$. 
Now, compute $j$ such that the number of edges within $N_i, \ldots,
N_j$ is at most $2n$, and is either no less than $n$, or else $j = n$.
This is possible since each new $N_k$ adds at
most $n-1$ new edges.

Apply the Forest Lemma to $F_{i-1}+N_i+\cdots+N_j$ to obtain $F_j$. Now look
at $(F_{i-1} + N_i + \cdots + N_j) - F_j$. This graph has only even degrees, and
as before it contains only edges from the forest $F_{i - 1}$ and some arbitrary
edges $N_i, \ldots, N_j$, all incident at one of $i, \ldots, j$. Thus we can partition its
edges into a set of circuits, and each such circuit must touch at least one of $i,
\ldots, j$. Interleaving these circuits allows us to create $C_i, \ldots, C_j$, and we now output $C_i, W_i, \ldots, C_j, W_j$.

\paragraph{Parallel Edges and Loops.}
Finally, we explain how to treat parallel edges in the previous algorithm. Rather than selecting $N_i, \ldots, N_j$ to have between $n$ and $2n$ edges, we select them to contain between $n$ and $2n$ \emph{distinct} edges. If an edge has odd multiplicity, replace all copies by one virtual edge; if it has even multiplicity, keep one copy and replace the rest by one virtual edge. Expand virtual edges when writing to output. Each distinct edge produces at most two virtual edges, so every batch contains $O(n)$ virtual edges.

Iteratively count how many distinct edges are within some $N_k$ for $k = i, \ldots$, and furthermore output a list of virtual edges for $N_k$, together with how many original edges they represent. To compute this for $N_k$, we assume that we have an array $\operatorname{count}[1..n]$, which initially contains only zeros. We now iterate through the adjacency list of $k$ within $G-S$, incrementing $\operatorname{count}[v]$. After this scan, $\operatorname{count}[v]$ gives the multiplicity of $kv$.
Then, by traversing the adjacency list of $k$ yet again, we can count distinct edges, output the virtual edge lists, and reset $\operatorname{count}$ to zero; we ought only add virtual edges within $N_k$, i.e.~where $k \leq v$.%

\paragraph{Analysis.}
First, note that constructing batches of edges, and outputting lists of virtual edges, takes $O(m + n)$ time and $O(n)$ space. Once we have the batches, since they consist of only $\bigO(n)$ virtual edges at a time, processing batches takes $\bigO(n)$ memory;
furthermore, processing one batch takes $\bigO(n)$ time, but we
only process $\bigO(m / n)$ batches, since each batch (except the last) is constructed to have
$\Omega(n)$ edges. Hence the total time complexity is $\bigO(m)$.%
\footnote{%
	Note that strictly speaking, our improved algorithm only constructs some subset of the
    $F_1, \ldots, F_n$ forests, namely those $F_i$ at batch boundaries. This
    does not affect correctness, since our invariant holds for those $F_i$ that we do construct; we certainly reach $F_n$ in the end, and
	$F_n$ must still be the empty graph as before. 
}

\section{Conclusion}
\label{s:conclusion}
We have given a linear-time algorithm whose working memory depends on
the number of vertices rather than the number of edges. The Forest lemma
is used twice: first to turn a spanning tree into the skeleton,
and then to keep only a forest of unfinished edges between consecutive batches.
The skeleton tour $S$ can be seen as a central tour that is sequentially written during a traversal while, in between, incident edges on that traversal are grouped and any emerging circuit is stitched along the walk through the skeleton. We have provided multiple implementations for different cases (simple graphs, multigraphs, etc.) and have shown correctness for each of these.

We note in passing that our algorithm is easy to modify for Eulerian walks. In particular, simply replace the skeleton \emph{circuit} with a skeleton \emph{walk}, starting and ending at the odd parity vertices. This can be found the same way as the procedure we outlined earlier.

\paragraph{Acknowledgments.}
We would like to thank the organizers of the \href{https://sites.google.com/view/zar2026}{2026 Romanian Algorithm Days}, where two of the authors first had some preliminary discussion about this project.
We also thank Andrei Feodorov and Alireza Kaviani for helping us with proof-reading the draft.
The authors used generative AI assistance in preparing the visualizations and for literature search. The authors assume responsibility for all content.

\bibliographystyle{alpha}
\bibliography{bibliography.bib}

\end{document}

%% file: ear-decomp-fig.tex
\providecolor{classicorange}{RGB}{230,159,0}
\providecolor{classicpurple}{RGB}{204,121,167}
\begin{figure}
\centering
\begin{tikzpicture}[
    vtx/.style={circle, fill=black, inner sep=1.6pt},
    edge/.style={draw=black, line width=0.8pt},
    skel/.style={draw=classicblue, line width=1.1pt},
    circ/.style={draw=#1, line width=1.0pt},
    lbl/.style={font=\small}
]
    \coordinate (g1) at (-6.6, 1.1);
    \coordinate (g2) at (-4.4, 1.1);
    \coordinate (g3) at (-4.4,-1.1);
    \coordinate (g4) at (-6.6,-1.1);

    \draw[edge] (g1) to[bend left=12]  (g2)   
                (g1) to[bend right=12] (g2);
    \draw[edge] (g2) to[bend left=16]  (g3)   
                (g2) to[bend left=0]   (g3)
                (g2) to[bend right=16] (g3);
    \draw[edge] (g3) to[bend left=16]  (g4)   
                (g3) to[bend left=0]   (g4)
                (g3) to[bend right=16] (g4);
    \draw[edge] (g4) to[bend left=12]  (g1)   
                (g4) to[bend right=12] (g1);
    \draw[edge] (g1) to[bend left=9]   (g3)   
                (g1) to[bend right=9]  (g3);
    \draw[edge] (g2) to[bend left=11]  (g4)   
                (g2) to[bend left=0]   (g4)
                (g2) to[bend right=11] (g4);

    \foreach \i in {1,...,4}{ \node[vtx] at (g\i) {}; }
    \node[lbl] at (-6.82, 1.30) {$v_1$};
    \node[lbl] at (-4.18, 1.30) {$v_2$};
    \node[lbl] at (-4.18,-1.30) {$v_3$};
    \node[lbl] at (-6.82,-1.30) {$v_4$};
    \node[lbl] at (-5.5,-1.85) {$G$};

    \def\R{1.9}
    \draw[skel] (0,0) circle (\R);
    \coordinate (p1) at (90:\R);
    \coordinate (p2) at (45:\R);
    \coordinate (p3) at (0:\R);
    \coordinate (p4) at (-45:\R);
    \coordinate (p5) at (-90:\R);
    \coordinate (p6) at (-135:\R);
    \coordinate (p7) at (180:\R);
    \coordinate (p8) at (135:\R);

    \draw[circ=classicorange] (p5) to[bend right=42] (p6)
                              (p6) to[bend right=42] (p7)
                              (p7) to[bend right=80] (p5);
    \draw[circ=classicgreen] (p2) to[bend right=28] (p3);
    \draw[circ=classicgreen] (p2) to[bend right=52] (p3);
    \draw[circ=classicpurple] (p8) to[bend left=14]  (p4);
    \draw[circ=classicpurple] (p8) to[bend right=14] (p4);

    \foreach \i in {1,...,8}{ \node[vtx] at (p\i) {}; }

    \node[lbl] at (90:2.2)    {$v_1$};
    \node[lbl] at (45:2.2)    {$v_2$};
    \node[lbl] at (0:2.2)     {$v_3$};
    \node[lbl] at (-45:2.2)   {$v_4$};
    \node[lbl] at (-90:2.2)   {$v_2$};
    \node[lbl] at (-135:1.52) {$v_4$};
    \node[lbl] at (180:2.2)   {$v_1$};
    \node[lbl] at (135:2.2)   {$v_3$};
    \node[lbl, classicblue] at (112:2.4) {$S$};
    \node[lbl, classicorange] at (-135:2.85) {$C_1$};
    \node[lbl, classicgreen]  at (22:2.55) {$C_2$};
    \node[lbl, classicpurple] at (0.46,0.46) {$C_3$};
\end{tikzpicture}
\caption{%
    Left: Eulerian multigraph $G$ with four vertices. 
    Right: The Eulerian tour constructed by Hierholzer's algorithm. Here, $S$ is the ``skeleton'' of the tour, a spanning Eulerian tour, to which further ``detour circuits'' are attached. We process the graph in batches, such that these additional circuits are found with $O(n)$ working space. For clarity of the drawing, we show the tour $S$ ``expanded'', by drawing re-visited vertices several times; they are the same vertex in the output, though. Note that the original Hierholzer algorithm does not insist on $S$ being spanning.
}
\label{fig:skeleton-overview}
\end{figure}

%% file: bibliography.bib
@article{Hierholzer1873,
  title={{{\"U}ber die M{\"o}glichkeit, einen Linienzug ohne Wiederholung und ohne Unterbrechung zu umfahren}},
  author={Hierholzer, Carl and Wiener, Christian},
  journal={Mathematische Annalen},
  volume={6},
  number={1},
  pages={30--32},
  year={1873},
  publisher={Springer-Verlag Berlin/Heidelberg}
}

@article{PevznerTW01,
  title={{An Eulerian path approach to DNA fragment assembly}},
  author={Pevzner, Pavel A and Tang, Haixu and Waterman, Michael S},
  journal={Proceedings of the national academy of sciences},
  volume={98},
  number={17},
  pages={9748--9753},
  year={2001},
  publisher={National Academy of Sciences}
}

@article{Orloff74,
  title={{A Fundamental Problem in Vehicle Routing}},
  author={Orloff, Clifford S},
  journal={Networks},
  volume={4},
  number={1},
  pages={35--64},
  year={1974},
  publisher={Wiley Online Library}
}

@article{Yamamoto22,
  title={{A Novel Single-Stroke Path Planning Algorithm for 3D Printers using Continuous Carbon Fiber Reinforced Thermoplastics}},
  author={Yamamoto, Kohei and Luces, Jose Victorio Salazar and Shirasu, Keiichi and Hoshikawa, Yamato and Okabe, Tomonaga and Hirata, Yasuhisa},
  journal={Additive Manufacturing},
  volume={55},
  pages={102816},
  year={2022},
  publisher={Elsevier}
}

@article{GlazikSS23,
  title={{A one pass streaming algorithm for finding {Euler} tours}},
  author={Glazik, Christian and Schiemann, Jan and Srivastav, Anand},
  journal={Theory of Computing Systems},
  volume={67},
  number={4},
  pages={671--693},
  year={2023},
  publisher={Springer}
}

@article{HagerupKL19,
  title={Space-efficient {Euler} partition and bipartite edge coloring},
  author={Hagerup, Torben and Kammer, Frank and Laudahn, Moritz},
  journal={Theoretical Computer Science},
  volume={754},
  pages={16--34},
  year={2019},
  publisher={Elsevier}
}

@inproceedings{IsmailiPW26,
  title = {Space-Efficient {Hierholzer}: Eulerian Cycles in {$O(m)$} Time and {$O(n)$} Space},
  DOI = {10.1137/1.9781611978964.34},
  booktitle = {Symposium on Simplicity in Algorithms (SOSA)},
  publisher = {SIAM},
  author = {Ismaili Alaoui, Ziad  and Plump,  Detlef and Wild,  Sebastian},
  year = {2026},
  month = Jan,
  pages = {421–430}
}

@article{ColonUrschel2024,
  title = {Hamilton Powers of {Eulerian} Digraphs},
  volume = {31},
  ISSN = {1077-8926},
  url = {http://dx.doi.org/10.37236/11905},
  DOI = {10.37236/11905},
  number = {2},
  journal = {The Electronic Journal of Combinatorics},
  publisher = {The Electronic Journal of Combinatorics},
  author = {Colón,  Enrico and Urschel,  John},
  year = {2024},
  month = June 
}
